\definecolor{lcol}{rgb}{0,0,0.5}
\title{An Under-Approximate Relational Logic}
\author{Toby Murray}
\institute{University of Melbourne and Data61\\\email{toby.murray@Unimelb.edu.au}}
\begin{document}

\maketitle

\thispagestyle{fancyplain}

\begin{abstract}
  Recently, authors have proposed \emph{under-approximate} logics for
  reasoning about programs. So far, all such logics have been confined to
  reasoning about individual program behaviours. Yet there exist many
  over-approximate \emph{relational} logics for reasoning about pairs of
  programs and relating their behaviours.

  We present the first under-approximate relational
  logic, for the simple imperative language IMP.
  We prove our logic
  is both sound and complete.
  Additionally,
  we show how reasoning in this logic
  can be decomposed into non-relational reasoning in an under-approximate
  Hoare logic, mirroring Beringer's result for over-approximate
  relational logics.
  We illustrate the application of our logic on some small
  examples in which we provably demonstrate the presence of insecurity.  
\end{abstract}

\section{Introduction}
  Almost all program logics are \emph{over-approximate}, in that they
  reason about an over-approximation of program behaviour. Recently,
  authors have proposed 
  \emph{under-approximate} logics~\cite{OHearn_19,deVries_Koutavas_11},
  for their promise to increase the
  scalability of reasoning methods, avoid false positives, and
  \emph{provably} detect bugs.

  So far, all such logics have been confined to reasoning about
  individual program executions. Yet, many properties of interest 
  talk about pairs of program behaviours. Examples are security
  properties like noninterference~\cite{Goguen_Meseguer_82}, but also
  function sensitivity,
  determinism, correct implementation (refinement)
  and program equivalence.
  Such properties can be reasoned about using \emph{relational} logics~\cite{Benton_04},
  in which pre- and post-conditions are relations on program states.
  Over-approximate relational logics for doing so have been
  well-studied~\cite{Benton_04,barthe2017proving,Ernst_Murray_19,maillard2019next}.

  Here, we present (\cref{sec:logic}) the first under-approximate relational
  logic, for the simple imperative language IMP~\cite{Nipkow_Klein:Isabelle}.
  We prove our logic
  is both sound and complete (\cref{sec:soundness-completeness}).
  Additionally (\cref{sec:decomposition}),
  we show how reasoning in this logic
  can be decomposed into non-relational reasoning in an under-approximate
  Hoare logic, mirroring Beringer's result for over-approximate
  relational logics~\cite{Beringer_11}.
  Under-approximate logics are powerful for provably demonstrating the
  presence of bugs.
  We illustrate the application of our logic on some small
  examples (\cref{sec:using}) for provably demonstrating violations of
  noninterference security. In this sense, our logic can be seen as the
  first \emph{insecurity logic}. However, being a general relational logic,
  it should be equally applicable for provably demonstrating violations
  of the other aforementioned relational properties.
  All results in this paper have been mechanised in Isabelle/HOL~\cite{Nipkow_PW:Isabelle}.

\section{The Logic}
\label{sec:logic}

\subsection{The Language IMP}

\newcommand{\kw}[1]{\textbf{#1}}
\newcommand{\Skip}{\kw{skip}}
\newcommand{\ITE}[3]{\kw{if}\ #1\ \kw{then}\ #2\ \kw{else}\ #3}
\newcommand{\While}[2]{\kw{while}\ #1\ \kw{do}\ #2}
\newcommand{\Seq}[2]{#1;\ #2}
\newcommand{\Assign}[2]{#1 := #2}

Our logic is defined for the simple imperative language~IMP~\cite{Nipkow_Klein:Isabelle}. Commands~$c$ in IMP include variable assignment, sequencing,
conditionals (if-then-else statements), iteration (while-loops), and
the no-op \Skip.

\[
c ::=  \Assign{x}{e}\ |\ \Seq{c}{c}\ |\ \ITE{b}{c}{c}\ |\ \While{b}{c}\ |\ \Skip
\]
Here, $e$ and~$b$ denote respectively (integer) arithmetic and boolean
expressions.

\newcommand{\sem}[3]{(#1,#2) \Rightarrow #3}
The big-step semantics of IMP is entirely standard (omitted for brevity)
and defines its semantics over states~$s$ that map variables~$x$ to (integer)
values.
It is characterised by the judgement $\sem{c}{s}{s'}$ which denotes that
command~$c$ when execute in initial state~$s$ terminates in state~$s'$.

\subsection{Under-Approximate Relational Validity}

\newcommand{\rel}[3]{#1(#2,#3)}

Relations~$R$ are binary relations on states. We write~$\rel{R}{s}{s'}$ when
states~$s$ and~$s'$ are related by~$R$.

\newcommand{\valid}[4]{\vDash \langle#1\rangle\ #2,\ #3\ \langle#4\rangle}

Given pre- and post-relations~$R$ and~$S$ and commands~$c$ and~$c'$,
under-approximate relational validity is denoted $\valid{R}{c}{c'}{S}$ and
is defined
analogously to its Hoare logic counterpart~\cite{deVries_Koutavas_11,OHearn_19}.

\begin{definition}[Under-Approximate Relational Validity]\label{defn:valid}
\[
\valid{R}{c}{c'}{S} \equiv (\forall\ t\ t'.\ \rel{S}{t}{t'} \implies (\exists\ s\ s'.\ \rel{R}{s}{s'} \land \sem{c}{s}{t} \land \sem{c'}{s'}{t'}))
\]
\end{definition}
It requires that for all final states~$t$ and~$t'$ in~$S$, there exists
pre-states~$s$ and~$s'$ in~$R$ for which executing~$c$ and~$c'$ from each
respectively
terminates in~$t$ and~$t'$.

\subsection{Proof Rules}

\newcommand{\prove}[4]{\vdash \langle#1\rangle\ #2,\ #3\ \langle#4\rangle}

Judgements of the logic are written~$\prove{R}{c}{c'}{S}$. \cref{fig:rules}
depicts the rules of the logic. For $\prove{R}{c}{c'}{S}$ we define a set
of rules for each command~$c$. Almost all of these rules has a counterpart
in under-approximate Hoare logic~\cite{OHearn_19,deVries_Koutavas_11}.
There are also
three additional rules (\textsc{Conseq}, \textsc{Disj}, and \textsc{Sym}).
Of these, only the final one~\textsc{Sym} does not have a counterpart in prior
under-approximate logics.

The \textsc{Skip} rule simply encodes the
meaning of $\valid{R}{\Skip}{c'}{S}$ in the premise. The \textsc{Seq} rule
essentially encodes the intuition that $\valid{R}{(\Seq{c}{d})}{c'}{S}$ holds
precisely when $\valid{R}{(\Seq{c}{d})}{(\Seq{c'}{\Skip})}{S}$ does.
However we can also view this rule differently: once $c_1$ and~$c'$ have
both been executed, the left-hand side still has~$c_2$ to execute; however
the right-hand side is done, represented by~$\Skip$. The relation~$S$ then
serves as a witness for linking the two steps together. 

\newcommand{\AssignPost}[3]{#1[#2:=#3]}

This same pattern is seen in the other rules too: given the pre-relation~$R$
the \textsc{Assign} rule
essentially computes the witness relation~$\AssignPost{R}{x}{e}$ that holds
following the assignment~$\Assign{x}{e}$, which (like the corresponding
rule in~\cite{OHearn_19}) simply applies the strongest-postcondition
predicate transformer to~$R$.

The rules for if-statements, \textsc{IfTrue} and~\textsc{IfFalse}, are the relational analogues of the corresponding
rules from under-approximate Hoare logic. The rules~\textsc{WhileTrue}
and~\textsc{WhileFalse} encode the branching structure of unfolding a
while-loop. The rule \textsc{Back-Var} is the analogue of the
``backwards variant'' iteration rule from~\cite{OHearn_19,deVries_Koutavas_11}. Here
$R$ is a family of relations indexed by natural number~$n$ (which
can be thought of as an iteration count) such that the corresponding
relation~$R(k)$ must hold after the~$k$th iteration.
Unlike the
corresponding rule in~\cite{OHearn_19}, this rule explicitly encodes
that the loop executes for some number of times (possibly 0) to reach a frontier,
at which point $\exists\ n.\ R(n)$ holds and the remainder of the loop
is related to command~$c'$. 

\newcommand{\flip}[1]{\stackrel{\leftrightarrow}{#1}}

The \textsc{Conseq} and \text{Disj} rules are the relational analogues
of their counterparts in~\cite{OHearn_19}. Finally, the \textsc{Sym} rule
allows to flip the order of~$c$ and~$c'$ when reasoning about
$\prove{R}{c}{c'}{S}$ so long as the relations are flipped accordingly,
denoted $\flip{R}$ etc. Naturally, this rule is necessary for the logic to
be complete.

\newcommand{\eval}[2]{\lbrack#1\rbrack_{#2}}
\newcommand{\one}[1]{#1^1}

\begin{figure}
  \begin{mathpar}
    \infer{\forall\ t\ t'.\ \rel{S}{t}{t'} \implies \exists s'.\ \rel{R}{t}{s'} \land \sem{c'}{s'}{t'}}
          {\prove{R}{\Skip}{c'}{S}}
          \textsc{Skip}
          
    \infer{\prove{R}{c}{c'}{S} \and \prove{S}{d}{\Skip}{T}}
          {\prove{R}{(\Seq{c}{d})}{c'}{T}}
          \textsc{Seq1}

    \infer{\prove{\AssignPost{R}{x}{e}}{\Skip}{c'}{S}}
          {\prove{R}{\Assign{x}{e}}{c'}{S}}
          \textsc{Assign}

    \infer{\prove{R \land \one{b}}{c_1}{c'}{S}} 
          {\prove{R}{\ITE{b}{c_1}{c_2}}{c'}{S}}
          \textsc{IfTrue}
          
    \infer{\prove{R \land \one{\lnot b}}{c_2}{c'}{S}}
          {\prove{R}{\ITE{b}{c_1}{c_2}}{c'}{S}}
          \textsc{IfFalse}
          
    \infer{\prove{R \land \one{\lnot b}}{\Skip}{c'}{S}}
          {\prove{R}{\While{b}{c}}{c'}{S}}
          \textsc{WhileFalse}

    \infer{\prove{R \land \one{b}}{(\Seq{c}{\While{b}{c}})}{c'}{S}}
          {\prove{R}{\While{b}{c}}{c'}{S}}
          \textsc{WhileTrue}

    \infer{\forall\ n. \prove{R(n) \land \one{b}}{c}{\Skip}{R(n+1)} \\ 
           \prove{\exists\ n.\ R(n)}{\While{b}{c}}{c'}{S}}
          {\prove{R(0)}{\While{b}{c}}{c'}{S}}
          \textsc{Back-Var}

    \infer{R_1 \implies R_2 \and \prove{R_1}{c}{c'}{S_1} \and S_2 \implies S_1}
          {\prove{R_2}{c}{c'}{S_2}}
          \textsc{Conseq}

    \infer{\prove{R_1}{c}{c'}{S_1} \and \prove{R_2}{c}{c'}{S_2}}
          {\prove{R_1 \lor R_2}{c}{c'}{S_1 \lor S_2}}
          \textsc{Disj}

    \infer{\prove{\flip{R}}{c'}{c}{\flip{S}}}
          {\prove{R}{c}{c'}{S}}
          \textsc{Sym}
  \end{mathpar}
  \caption{Proof rules.\label{fig:rules} \AssignPost{R}{x}{e} denotes the
    transformer on relation~$R$ following the assignment~$\Assign{x}{e}$. It
  is defined as: $\AssignPost{R}{x}{e} \equiv \lambda\ s\ s'.\ \exists\ v.\ \rel{R}{s(x:=v)}{s'} \land s(x) = \eval{e}{s(x:=v)}$. Logical operators $\land$ and $\lor$ are lifted to relations, e.g.: $R \land S \equiv \lambda\ s\ s'.\ \rel{R}{s}{s'} \land \rel{S}{s}{s'}$. For boolean expression~$b$, $\one{b} \equiv \lambda\ s\ s'.\ \eval{b}{s}$. Also, $\exists\ n.\ R(n) \equiv \lambda\ s\ s'.\ \exists\ n.\ \rel{R(n)}{s}{s'}$. $R \implies S$ denotes when relation~$R$ is a subset of relation~$S$. Finally, $\flip{R}\ \equiv \lambda\ s\ s'.\ \rel{R}{s'}{s}$.}
\end{figure}

\section{Soundness and Completeness}
\label{sec:soundness-completeness}

\begin{theorem}[Soundness]\label{thm:soundness}
  For relations~$R$ and~$S$ and commands~$c$ and~$c'$, if
  $\prove{R}{c}{c'}{S}$ then $\valid{R}{c}{c'}{S}$.
\end{theorem}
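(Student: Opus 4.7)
The plan is to prove soundness by rule induction on the derivation of $\prove{R}{c}{c'}{S}$, unfolding \cref{defn:valid} at each rule and using the big-step semantics of IMP to construct the required pre-states~$s$ and $s'$ from the induction hypotheses. For every rule we fix arbitrary final states~$t,t'$ with the postrelation holding, and explicitly witness the existential in the definition of validity.

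The cases for \textsc{Skip}, \textsc{Assign}, \textsc{IfTrue}, \textsc{IfFalse}, \textsc{WhileFalse}, \textsc{WhileTrue}, \textsc{Conseq}, \textsc{Disj} and \textsc{Sym} are essentially calculational. In \textsc{Skip}, take $s := t$ and the $s'$ supplied by the premise, using $\sem{\Skip}{t}{t}$. In \textsc{Seq1}, from $\valid{S}{d}{\Skip}{T}$ applied to $(t,t')$ we obtain intermediate states $(s_1,t')$ with $\sem{d}{s_1}{t}$, then feed $(s_1,t')\in S$ into $\valid{R}{c}{c'}{S}$ to get $(s,s')\in R$ with $\sem{c}{s}{s_1}$ and $\sem{c'}{s'}{t'}$; the sequencing semantics then yields $\sem{\Seq{c}{d}}{s}{t}$. \textsc{Assign} is handled by unfolding the definition of $\AssignPost{R}{x}{e}$ given in \cref{fig:rules} and using the semantics of assignment. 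The two if-rules and the two while-unfolding rules are immediate from the induction hypothesis together with the corresponding big-step rule; \textsc{Conseq} follows by relational inclusion, \textsc{Disj} by case analysis on which disjunct the witness pre-states satisfy, and \textsc{Sym} is symmetric directly from \cref{defn:valid} since $\flip{(\cdot)}$ simply swaps arguments.

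The main obstacle is the \textsc{Back-Var} rule, whose induction hypotheses only constrain $c$ per-iteration and push the loop's continuation into a second premise. My plan is to prove, by an auxiliary induction on $n$, the statement that for every $n$ and every pair $(t,t')$ with $R(n)(t,t')$ there exist $(s,s')$ with $R(0)(s,s')$ and $\sem{(\While{b}{c})^{n}}{s}{t}$ (i.e. $t$ is reached from $s$ by $n$ true-iterations of the loop body followed by leaving the loop intact), together with the original right-hand witness. The base case $n=0$ is immediate with $s := t$; the step case combines $R(n+1)$-validity of the body (from the first premise) with the induction hypothesis, prepending one execution of $c$ whose guard is $b$. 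Given an arbitrary $(t,t')\in S$, the second premise of \textsc{Back-Var} gives $(s_0,s')$ with $\exists n.\ R(n)(s_0,s_0')$ and a loop-execution $\sem{\While{b}{c}}{s_0}{t}$; feeding that $n$ and $(s_0,s_0')$ into the auxiliary claim yields pre-states in $R(0)$ from which the loop reaches $t$, completing the case.

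Finally, assembling all rule cases gives the induction, and hence $\valid{R}{c}{c'}{S}$ whenever $\prove{R}{c}{c'}{S}$. I expect the mechanised Isabelle/HOL argument to proceed in exactly this shape, with the backwards-variant lemma stated and proved as a separate sublemma before the main induction.
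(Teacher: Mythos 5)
Your proposal is correct and follows essentially the same route as the paper: rule induction on the derivation, with all cases discharged by unfolding \cref{defn:valid} against the big-step semantics, and the \textsc{Back-Var} case handled by an auxiliary induction on~$n$ that generalises the claim from $R(0)$ to $R(n)$ (exactly the generalisation the paper's proof sketch alludes to). The only blemishes are cosmetic: a stray $s'$ versus $s_0'$ in the \textsc{Back-Var} paragraph, and ``prepending'' where the peeled body execution is really appended after the $n$ accumulated iterations.
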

\begin{proof}
  By structural induction on~$\prove{R}{c}{c'}{S}$. Almost all cases
  are straightforward. The only exception is the \textsc{Back-Var}
  rule, which, similarly to~\cite{OHearn_19}, requires an induction on
  the argument~$n$ to~$R(n)$ and generalising the result from~$R(0)$ to
  some~$R(k)$ to get a sufficiently strong induction hypothesis.
  \end{proof}

\begin{theorem}[Completeness]\label{thm:completeness}
  For relations~$R$ and~$S$ and commands~$c$ and~$c'$, if
  $\valid{R}{c}{c'}{S}$ then $\prove{R}{c}{c'}{S}$.
\end{theorem}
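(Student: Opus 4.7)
The plan is to proceed by structural induction on the left-hand command~$c$, with the induction hypothesis universally quantified over $c'$, $R$, and $S$. Each constructor of $c$ is matched by a rule (or a small group of rules) whose premises can be discharged by applying the induction hypothesis to syntactically smaller subcommands, while $c'$ is simply carried along unchanged.

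For $c = \Skip$, the premise of \textsc{Skip} coincides with $\valid{R}{\Skip}{c'}{S}$ once one unfolds $\sem{\Skip}{s}{t}$ as $s = t$. For $c = \Assign{x}{e}$, a short calculation using the assignment semantics shows that $\valid{R}{\Assign{x}{e}}{c'}{S}$ entails $\valid{\AssignPost{R}{x}{e}}{\Skip}{c'}{S}$; the previous case then yields a proof, and \textsc{Assign} concludes. For $c = \Seq{c_1}{c_2}$, take the intermediate relation $T(s_1, t') \equiv \exists\ s_0\ s'.\ \rel{R}{s_0}{s'} \land \sem{c_1}{s_0}{s_1} \land \sem{c'}{s'}{t'}$; then $\valid{R}{c_1}{c'}{T}$ holds by construction and $\valid{T}{c_2}{\Skip}{S}$ follows by peeling off the last step of any witness to the original validity, so the induction hypothesis together with \textsc{Seq1} closes the case. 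For $c = \ITE{b}{c_1}{c_2}$, split $S$ into the subsets $S_b$ and $S_{\lnot b}$ whose pairs are witnessed via the true and false branches respectively; validity transfers to $\valid{R \land \one{b}}{c_1}{c'}{S_b}$ and $\valid{R \land \one{\lnot b}}{c_2}{c'}{S_{\lnot b}}$, and the induction hypothesis combined with \textsc{IfTrue}, \textsc{IfFalse}, \textsc{Disj}, and \textsc{Conseq} (using $S = S_b \lor S_{\lnot b}$) finishes.

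The main obstacle is $c = \While{b}{c_1}$, which requires a well-chosen family for \textsc{Back-Var}. Let $\rel{R(n)}{s_n}{s'}$ hold iff there exist states $s_0, \ldots, s_{n-1}$ with $\rel{R}{s_0}{s'}$, $\eval{b}{s_i}$ for every $i < n$, and $\sem{c_1}{s_i}{s_{i+1}}$, so that $R(0) = R$. The first premise $\prove{R(n) \land \one{b}}{c}{\Skip}{R(n+1)}$ has validity immediate from this definition and is derivable via the induction hypothesis on the body~$c_1$ with right-hand command $\Skip$. The second premise $\prove{\exists\ n.\ R(n)}{\While{b}{c_1}}{c'}{S}$ is handled by \textsc{WhileFalse}, reducing it to $\prove{(\exists\ n.\ R(n)) \land \one{\lnot b}}{\Skip}{c'}{S}$: for any $(t, t') \in S$ witnessed by $(s_0, s') \in R$ with $\sem{\While{b}{c_1}}{s_0}{t}$, the iteration count $n$ until termination places $(t, s')$ in $R(n)$ with $\lnot \eval{b}{t}$, so \textsc{Skip} discharges the residual. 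The delicate step throughout is choosing this family $R(n)$ so that both \textsc{Back-Var} premises align, and ensuring that the second one collapses cleanly back to the already-handled \textsc{Skip} base case.
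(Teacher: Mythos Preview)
Your proposal is correct and follows the same approach as the paper: structural induction on the left command~$c$, with the \textsc{While} case handled via \textsc{Back-Var} using an iteration-indexed family that records partial unrollings of the loop body, in the style of the completeness proof of~\cite{OHearn_19}. The paper's own proof gives no further detail beyond that outline, and your case analysis (intermediate relation for \textsc{Seq1}, splitting $S$ for the conditional, collapsing the second \textsc{Back-Var} premise through \textsc{WhileFalse} and \textsc{Skip}) is exactly the kind of instantiation one expects when spelling it out.
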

\begin{proof}
  By induction on~$c$. This proof follows a similar structure to the
  completeness proof of~\cite{OHearn_19}. 
\end{proof}

\section{Under-Approximate Relational Decomposition}
\label{sec:decomposition}

For over-approximate relational logic, Beringer~\cite{Beringer_11}
showed how one can decompose relational reasoning into
over-approximate Hoare logic reasoning.
We show that the same is true for under-approximate relational logic,
in which relational reasoning can be decomposed into
\emph{under-approximate} Hoare logic reasoning.

\newcommand{\decomp}[4]{\mathsf{decomp}(#1,#2,#3,#4)}
The essence of the approach is to decompose
$\valid{R}{c}{c'}{S}$ into two statements: one about~$c$ and the other
about~$c'$. $\valid{R}{c}{c'}{S}$ says that for every $S$-pair of
final states $t$ and~$t'$, we can find an~$R$-pair of states~$s$ and~$s'$
from which~$t$ and~$t'$ can be reached respectively by executing~$c$
and~$c'$. To decompose this we think about what it would mean to first
execute~$c$ on its own. At this point we would have reached a state~$t$
which has a $c$-predecessor~$s$ for which $\rel{R}{s}{s'}$, and the
execution of~$c'$ would not yet have begun and would still be in the
state~$s'$ which has a $c'$-successor~$t'$ for which~$\rel{S}{t}{t'}$.
We write $\decomp{R}{c}{c'}{S}$ that defines this relation that holds between~$t$ and~$s'$:
\[
\rel{\decomp{R}{c}{c'}{S}}{t}{s'} \equiv \exists\ s\ t'.\ \rel{P}{s}{s'} \land \sem{c}{s}{t} \land \sem{c'}{s'}{t'} \land \rel{Q}{t}{t'}
\]

This allows us to decompose $\valid{R}{c}{c'}{S}$ into two separate
relational statements about~$c$ and~$c'$. Here we use~$\Skip$ to indicate
when one command executes but the other does not make a corresponding step,
to simulate the idea that first~$c$ executes followed by~$c'$.

\begin{lemma}[Relational Validity Decomposition]\label{lem:decomp}
For relations~$R$ and~$S$ and commands~$c$ and~$c'$, 
\[
\valid{R}{c}{c'}{S} \iff
\left(\begin{array}{l}
\valid{R}{c}{\Skip}{\decomp{R}{c}{c'}{S}} \ \land \\
\valid{\decomp{R}{c}{c'}{S}}{\Skip}{c'}{S}
\end{array}\right)
\]
\end{lemma}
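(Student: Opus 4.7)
The plan is straightforward definition-chasing. First I would unfold each of the three occurrences of $\valid{\cdot}{\cdot}{\cdot}{\cdot}$ via \cref{defn:valid} and simplify the two $\Skip$-judgements using the fact that $\sem{\Skip}{s}{t}$ holds iff $s = t$. The right-hand side of the biconditional then becomes the conjunction of (i)~$\forall t\, s'.\ \rel{\decomp{R}{c}{c'}{S}}{t}{s'} \implies \exists s.\ \rel{R}{s}{s'} \land \sem{c}{s}{t}$ and (ii)~$\forall t\, t'.\ \rel{S}{t}{t'} \implies \exists s'.\ \rel{\decomp{R}{c}{c'}{S}}{t}{s'} \land \sem{c'}{s'}{t'}$.

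For the forward direction I would assume $\valid{R}{c}{c'}{S}$ and dispatch the two conjuncts independently. Conjunct~(i) falls directly out of the definition of $\mathsf{decomp}$: any pair $(t,s')$ in $\mathsf{decomp}$ literally carries, by definition, a witness $s$ with the required $\rel{R}{s}{s'}$ and $\sem{c}{s}{t}$. For conjunct~(ii), given $\rel{S}{t}{t'}$, I would apply validity to obtain $s,s'$ with $\rel{R}{s}{s'}$, $\sem{c}{s}{t}$ and $\sem{c'}{s'}{t'}$; these three facts together with the assumed $\rel{S}{t}{t'}$ are precisely the witnesses demanded by $\rel{\decomp{R}{c}{c'}{S}}{t}{s'}$, so this $s'$ serves as the existential witness required by~(ii).

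For the backward direction, given $\rel{S}{t}{t'}$ I would apply~(ii) to obtain some $s'$ with $\sem{c'}{s'}{t'}$ and $\rel{\decomp{R}{c}{c'}{S}}{t}{s'}$; unfolding the latter yields an $s$ with $\rel{R}{s}{s'}$ and $\sem{c}{s}{t}$, which together with $\sem{c'}{s'}{t'}$ establishes $\valid{R}{c}{c'}{S}$. Interestingly, conjunct~(i) is not actually needed in this direction---it is implied by~(ii) via the same unfolding---but retaining it yields the cleaner symmetric formulation of the lemma.

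I do not anticipate a real obstacle; the whole argument is mechanical, and the only care needed is bookkeeping the several existential witnesses and not conflating the state $t$ reached by~$c$ with the starting state~$s'$ of~$c'$. The substantive content of the lemma lies not in its proof but in the observation that $\mathsf{decomp}$ is precisely the right intermediate relation to make such a two-step decomposition go through.
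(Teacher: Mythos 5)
Your proof is correct and matches the paper's approach: the paper's own proof is exactly this definition-unfolding argument (it reads, in full, ``From the definitions of relational validity and $\mathsf{decomp}$''), with the semantics of $\Skip$ collapsing one side of each auxiliary judgement. Your additional observation that conjunct~(i) holds unconditionally and is redundant for the backward direction is accurate and goes slightly beyond what the paper records.
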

\begin{proof}
  From the definitions of $\valid{R}{c}{c'}{S}$ and $\decomp{R}{c}{c'}{S}$.
\end{proof}

Now $\valid{R}{c}{\Skip}{S}$ talks about only one execution (since the
second command~$\Skip$ doesn't execute by definition).
And indeed it can be expressed equivalently as a statement in
under-approximate Hoare logic.

\newcommand{\validH}[3]{\langle#1\rangle\ #2\ \langle#3\rangle}
For Hoare logic pre- and post-conditions~$P$ and~$Q$ and command
$c$, we write $\validH{P}{c}{Q}$ to mean that~$Q$ \emph{under-approximates}
the post-states~$t$ that $c$ can reach beginning from
a~$P$-pre-state~$s$~\cite{deVries_Koutavas_11,OHearn_19}.

\newcommand{\pred}[2]{#1(#2)}

\begin{definition}[Under-approximate Hoare logic validity]\label{defn:validH}
\[
\validH{P}{c}{Q} \equiv (\forall\ t.\ \pred{Q}{t} \implies (\exists\ s.\ \pred{P}{s} \land \sem{c}{s}{t}))
\]
\end{definition}
Then the following result is a trivial consequence:
\begin{lemma}\label{lem:blah}
  \[
  \valid{R}{\Skip}{c'}{Q} \iff (\forall\ t.\ \validH{\pred{R}{t}}{c'}{\pred{S}{t}})
  \]
\end{lemma}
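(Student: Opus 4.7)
The plan is to prove both directions by unfolding \cref{defn:valid} and \cref{defn:validH}; the lemma should fall out almost immediately once one observes that the $\Skip$ on the left collapses one of the existentials. I read the appearance of $Q$ on the left-hand side and $S$ on the right-hand side of the stated biconditional as a typographical slip and treat both as the same relation $S$ throughout; no other reading makes the biconditional hold for arbitrary and independent choices of the two symbols.

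First I would unfold the left-hand side. By \cref{defn:valid}, $\valid{R}{\Skip}{c'}{S}$ expands to
\[
\forall\ t\ t'.\ \rel{S}{t}{t'} \implies \exists\ s\ s'.\ \rel{R}{s}{s'} \land \sem{\Skip}{s}{t} \land \sem{c'}{s'}{t'}.
\]
The big-step semantics of $\Skip$ forces $s = t$, so the existential over $s$ collapses and the statement is equivalent to
\[
(\star)\quad \forall\ t\ t'.\ \rel{S}{t}{t'} \implies \exists\ s'.\ \rel{R}{t}{s'} \land \sem{c'}{s'}{t'}.
\]

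Next I would unfold the right-hand side. For any fixed $t$, \cref{defn:validH} gives
\[
\validH{\pred{R}{t}}{c'}{\pred{S}{t}} \equiv \forall\ t'.\ \pred{\pred{S}{t}}{t'} \implies \exists\ s'.\ \pred{\pred{R}{t}}{s'} \land \sem{c'}{s'}{t'},
\]
and reading the partial application $\pred{R}{t}$ as $\lambda s'.\ \rel{R}{t}{s'}$ (and analogously for $S$) turns the body into exactly the body of $(\star)$ for that particular $t$. Prepending $\forall\ t$ then matches the outer quantifier of $(\star)$ verbatim, so both implications of the biconditional drop out in one step.

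There is no real obstacle here: once $\sem{\Skip}{s}{t}$ has been replaced by $s = t$, the entire proof is propositional reshuffling of quantifiers. The only subtlety worth flagging in the write-up is the conversion between the relational view (a two-argument relation $R$) and the predicate view (the one-argument $\pred{R}{t}$), which is what lets the relation $R$ be curried into a $t$-indexed family of Hoare preconditions; without that identification the two sides look superficially different despite being the same formula.
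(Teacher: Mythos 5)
Your proof is correct and takes essentially the approach the paper intends: the paper gives no explicit proof, presenting the lemma as a ``trivial consequence'' of \cref{defn:valid} and \cref{defn:validH}, and your unfolding (collapsing the $\Skip$ existential to $s = t$ and currying $R$ and $S$ on their first argument) is exactly that argument, with your intermediate form $(\star)$ coinciding with the premise of the \textsc{Skip} proof rule. You are also right to flag the $Q$ versus $S$ mismatch as a typographical slip in the statement.
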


\cref{thm:decomp} below follows straightforwardly from \cref{lem:decomp}
and~\cref{lem:blah}. It  states how every under-approximate
relational statement
$\valid{R}{c}{c'}{S}$ can be decomposed into two under-approximate
Hoare logic statements: one about~$c$ and the other about~$c'$.

\begin{theorem}{Under-Approximate Relational Decomposition}\label{thm:decomp}
  
  $\valid{R}{c}{c'}{S}$ if and only if
  \[
\left(\begin{array}{l}
  \forall\ t'.\ \validH{(\lambda\ t.\ \rel{R}{t}{t'})}{c}{(\lambda\ t.\ \rel{\decomp{R}{c}{c'}{S}}{t}{t'})}\  \land\\
  \forall\ t.\ \validH{\pred{\decomp{R}{c}{c'}{S}}{t}}{c'}{\pred{Q}{t}}
  \end{array}\right)
\]
\end{theorem}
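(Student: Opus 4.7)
The plan is to follow the hint in the paragraph preceding the theorem: apply \cref{lem:decomp} to split $\valid{R}{c}{c'}{S}$ into a conjunction of two one-sided relational judgements (one with $\Skip$ on the right, one with $\Skip$ on the left), and then convert each of these into a universally quantified family of under-approximate Hoare triples, using \cref{lem:blah} (together with a mirror-image of it for the other side). Since the goal is a biconditional, I would prove both directions simultaneously by chaining biconditionals; all of the intermediate rewrites are equivalences.

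First, \cref{lem:decomp} rewrites $\valid{R}{c}{c'}{S}$ as the conjunction of $\valid{R}{c}{\Skip}{\decomp{R}{c}{c'}{S}}$ and $\valid{\decomp{R}{c}{c'}{S}}{\Skip}{c'}{S}$. The second conjunct matches the shape of \cref{lem:blah} exactly (with pre-relation $\decomp{R}{c}{c'}{S}$ and post-relation $S$), so an immediate application yields $\forall\ t.\ \validH{\pred{\decomp{R}{c}{c'}{S}}{t}}{c'}{\pred{S}{t}}$, which is the second conjunct of the theorem (reading the displayed $Q$ as $S$).

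The first conjunct needs a mirror-image version of \cref{lem:blah}, since here $\Skip$ sits on the right rather than the left. I would establish the symmetric statement $\valid{R}{c}{\Skip}{S} \iff \bigl(\forall\ t'.\ \validH{(\lambda\ t.\ \rel{R}{t}{t'})}{c}{(\lambda\ t.\ \rel{S}{t}{t'})}\bigr)$ either by unfolding \cref{defn:valid} directly and using $\sem{\Skip}{s'}{t'} \iff s' = t'$ to eliminate the second execution, or, more cheaply, by observing that $\valid{R}{c}{\Skip}{S} \iff \valid{\flip{R}}{\Skip}{c}{\flip{S}}$ is immediate from the symmetry of \cref{defn:valid} under swapping the two executions, and then invoking \cref{lem:blah}. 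Instantiating this symmetric lemma at pre-relation $R$ and post-relation $\decomp{R}{c}{c'}{S}$ produces the first conjunct of the theorem.

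The main obstacle is only bookkeeping: making sure the existential witnesses are re-curried consistently when moving between the binary-relation view (pairs $(s,s')$) and the indexed-family view (a parameter $t$ or $t'$ fixing a unary predicate), and checking that the $\Skip$-step is eliminated in the correct position on each side. Once the symmetric variant of \cref{lem:blah} is in place, assembling the two halves yields the biconditional and completes the proof.
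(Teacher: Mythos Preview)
Your proposal is correct and follows exactly the route the paper indicates: apply \cref{lem:decomp} to split into the two one-sided judgements, then convert each to a family of under-approximate Hoare triples via \cref{lem:blah} (together with the symmetric variant you identify for the $\Skip$-on-the-right case). You also correctly spot that the displayed $Q$ in the theorem should be read as $S$.
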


This theorem implies that, just as over-approximate Hoare logic can be used
to prove over-approximate relational properties~\cite{Beringer_11},
\emph{all} valid under-approximate relational properties can be proved
using (sound and complete) under-approximate Hoare logic reasoning.
This theorem therefore provides
strong evidence for the applicability of under-approximate reasoning methods
(such as under-approximate symbolic execution) for proving under-approximate
relational properties, including e.g.\  via product program
constructions~\cite{barthe2011secure,barthe2011relational,barthe2013beyond,Eilers2018}.

\section{Using the Logic}
\label{sec:using}

While the logic is complete, its rules on their own are inconvenient
for reasoning about common relational properties.
Indeed, often when reasoning in relational logics one is reasoning about
two very similar programs~$c$ and~$c'$. For instance, when proving
noninterference~\cite{Goguen_Meseguer_82} $c$ and~$c'$ might be identical, differing only
when the program branches on secret data~\cite{Murray_MBGK_12}. When proving refinement from
an abstract program~$c$ to a concrete program~$c'$, the two may often
have similar structure~\cite{Cock_KS_08}. 
Hence for the logic to be usable one can derive familiar-looking proof
rules that talk about similar programs.

\subsection{Deriving Proof Rules via Soundness and Completeness}

For instance, the following rule is typical of relational logics
(both over- and under-approximate) for decomposing matched sequential
composition. 

\begin{mathpar}
  \infer{\prove{R}{c}{c'}{S} \and \prove{S}{d}{d'}{T}}
        {\prove{R}{(\Seq{c}{d})}{(\Seq{c'}{d'})}{T}}
        \textsc{Seq-Matched}
\end{mathpar}

Rather than appealing to the rules of~\cref{fig:rules}, this rule is
simpler to derive from the corresponding property on under-approximate
relational validity. In particular, it is straightforward to prove
from \cref{defn:valid} and the big-step semantics of IMP
that when $\valid{R}{c}{c'}{S}$ and
$\valid{S}{d}{d'}{T}$ hold, then $\valid{R}{(\Seq{c}{d})}{(\Seq{c'}{d'})}{T}$
follows. The proof rule above is then a trivial
consequence of soundness and
completeness.

Other such rules that we have also proved by appealing to corresponding
properties of
under-approximate relational validity with soundness and completeness include
the following one for reasoning about matched while-loops.

\newcommand{\two}[1]{#1^2}

\begin{mathpar}
    \infer{\forall\ n. \prove{R(n) \land \one{b} \land \two{b'}}{c}{c'}{R(n+1)} \\
           \prove{\exists\ n.\ R(n)}{\While{b}{c}}{\While{b'}{c'}}{S}}
          {\prove{R(0)}{\While{b}{c}}{\While{b'}{c'}}{S}}
          \textsc{Back-Var-Matched}          
\end{mathpar}

\subsection{Derived Proof Rules}

Other rules for matched programs
are more easily derived by direct appeal to existing proof rules.
For instance, the following rule for matched assignments can be derived
straightforwardly from \textsc{Assign} and~\textsc{Sym}.
\[
\infer{}{\prove{R}{\Assign{x}{e}}{\Assign{x'}{e'}}{\AssignPost{\AssignPost{R}{x}{e}}{x'}{e'}}}\textsc{Assign-Matched}
\]
Similarly the following rule follows from \textsc{Back-Var-Matched} above,
with \textsc{WhileFalse} and~\textsc{Sym}.
\begin{mathpar}
    \infer{\forall\ n. \prove{R(n) \land \one{b} \land \two{b'}}{c}{c'}{R(n+1)}}
          {\prove{R(0)}{\While{b}{c}}{\While{b'}{c'}}{\exists\ n.\ R(n) \land \one{\lnot b} \land \two{\lnot b'}}}
          \textsc{Back-Var-Matched2} 
\end{mathpar}
It is useful for reasoning past two matched loops (rather than for detecting
violations of relational validity within their bodies) and is an almost
direct relational analogue of the backwards variant rule
from~\cite{OHearn_19}.

\subsection{Examples}


\newcounter{linenumbercounter}
\newlength{\codeindentlength}
\setlength{\codeindentlength}{0.5em}
\newcommand{\increaseindent}{\addtolength{\codeindentlength}{1em}}
\newcommand{\decreaseindent}{\addtolength{\codeindentlength}{-1em}}
\newcommand{\indentcode}{\hspace*{\codeindentlength}}

\newcommand{\printcodelinenumber}{{\scriptsize\arabic{linenumbercounter}}\stepcounter{linenumbercounter}}
\newcommand{\codestart}{\setcounter{linenumbercounter}{1}\printcodelinenumber\,}

\newcommand{\codenewline}{\\\printcodelinenumber\indentcode}

\newenvironment{examplecode}{\codestart}{}

\renewcommand{\Seq}[2]{#1;\codenewline#2}
\renewcommand{\Assign}[2]{#1\ensuremath{{}\mathbin{:=}{}}#2}
\renewcommand{\ITE}[3]{\kw{if}\ #1\increaseindent\codenewline#2\decreaseindent\codenewline\kw{else}\increaseindent\codenewline#3\decreaseindent}
\newcommand{\ITNoElse}[2]{\kw{if}\ #1\increaseindent\codenewline#2\decreaseindent}
\renewcommand{\While}[2]{\kw{while}\ #1\ \kw{do}\increaseindent\codenewline#2\decreaseindent}

\newcommand{\var}[1]{\textsf{#1}}
\newcommand{\low}{\var{low}}
\newcommand{\varx}{\var{x}}
\newcommand{\varn}{\var{n}}
\newcommand{\vary}{\var{y}}
\newcommand{\high}{\var{high}}

We illustrate the logic on a few small examples. Each of these examples
concerns the relational property of noninterference~\cite{Goguen_Meseguer_82}.
Under-approximate logics are aimed to prove the existence of bugs rather
than to prove their absence~\cite{OHearn_19}.  Each example has a security bug that
violates noninterference that we provably demonstrate via our logic.

In a programming language-based setting~\cite{Sabelfeld_Myers_03}, 
\emph{termination-insensitive} noninterference holds for a program~$c$ if,
when we execute it from two initial states~$s$ and~$s'$ that agree on the
value of all public (aka ``low'') variables, then whenever both executions
terminate the resulting states~$t$ and~$t'$ must also agree on the values
of all public variables.

\newcommand{\LowEq}{\textsf{L}}

Taking the variable $\low$ to be the only
public variable, we can define when two states agree on the value of
all public variables, written \LowEq, as follows:
\[
\rel{\LowEq}{s}{s'} \equiv (s(\low) = s'(\low))
\]
Then a program~$c$ is insecure if it has two executions beginning in
states~$s$ and~$s'$ for which $\rel{\LowEq}{s}{s'}$ that terminate
in states~$t$ and~$t'$ for which $\rel{\lnot \LowEq}{t}{t'}$.
This holds precisely when there exists some non-empty relation~$S$, 
for which
\[
S \implies \lnot \LowEq \text{ and } \valid{\LowEq}{c}{c}{S}
\]
In fact, since our logic is sound, if we can prove
$\prove{\LowEq}{c}{c}{S}$ for some satisfiable $S$ such that
$S \implies \lnot \LowEq$, then the program \emph{must} be insecure.

\begin{figure}
  \begin{tabular}{ccc}
    \begin{minipage}{0.3\textwidth}
\begin{examplecode}
\ITE{$\varx > 0$}{\Assign{\low}{1}}{\Assign{\low}{0}}
\end{examplecode}
\end{minipage}
    &
\begin{minipage}{0.3\textwidth}
\begin{examplecode}
\Seq{\Assign{$\varx$}{0}}
{\Seq{\While{$\varn > 0$}
            {\Seq{\Assign{$\varx$}{$\varx + \varn$}}
                 {\Assign{$\varn$}{$\vary$}}
            }
     }
     {\ITE{$\varx = 2000000$}{\Assign{\low}{\high}}{\Skip}}
}  
\end{examplecode}
\end{minipage}
&
\begin{minipage}{0.3\textwidth}
\begin{examplecode}
\Seq{\Assign{$\varx$}{0}}
    {\While{$\varx < 4000000$}
            {\Seq{\Assign{$\varx$}{$\varx + 1$}}
            {\ITE{$\varx = 2000000$}{\Assign{\low}{\high}}{\Skip}}
            }
     }
\end{examplecode}
\end{minipage}
  \end{tabular}
  \caption{Examples (some inspired from~\cite{OHearn_19}).\label{fig:examples}}
\end{figure}

The left program in \cref{fig:examples} is trivially insecure and
this is demonstrated straightforwardly by using a derived proof rule
(omitted for brevity)
for matching if-statements that follows the then-branch of the first
and the else-branch of the second, and then by applying the matched assignment
rule~\textsc{Assign-Matched}.

The middle program of \cref{fig:examples} is also insecure, and is
inspired by the \texttt{client0} example of~\cite{OHearn_19}.
To prove it insecure, we first use the consequence rule to strengthen
the pre-relation to one that ensures that the values of the $\high$
variable in both states are distinct. We use the matched assignment and
sequencing rules, \textsc{Assign-Matched} and~\textsc{Seq-Matched}.
Then we apply some derived
proof rules (omitted) for matched while-loops that (via the
disjunction rule \textsc{Disj}) consider both the case in which both loops
terminate immediately (ensuring $\varx = 0$)
and the case in which both execute just once (ensuring $\varx > 0$). Taking
the disjunction gives the post-relation for the loops that $\varx \geq 0$
in both states,
which is of course the pre-relation for the following if-statement.
The rule of consequence allows this pre-relation to be strengthened
to $\varx = 2000000$ in both states,
from which a derived rule (omitted) for matching
if-statements is applied that explores both then-branches, observing
the insecurity.

The rightmost example of \cref{fig:examples} makes full use of the
rule \textsc{Back-Var-Matched}. This rule is applied first with the
backwards-variant relation~$R(n)$ instantiated to require that in both states
$\varx = n$  and $0 \leq \varx < 2000000$.
This effectively reasons over both loops up to the critical iteration
in which $\varx$ will be incremented to become $2000000$.
At this point, we apply a derived rule for matching while loops
similar to \textsc{WhileTrue} that unfolds both for one iteration.
This allows witnessing the insecurity. We then need to reason over the
remainder of the loops to reach the termination of both executions.
This is done using the \textsc{Back-Var-Matched2} rule, with~$R(n)$
instantiated this time to require that in both states $\varx = 2000000 + n$
and $2000000 \leq \varx \leq 4000000$  and, critically,
that $\low = \high$ in both states (ensuring that the security violation
is not undone during these subsequent iterations).

\section{Related Work}

Over-approximate relational logics have been well-studied. Indeed
recent work purports to generalise many previous and forthcoming relational
logics~\cite{maillard2019next}. Our logic is different from these prior
relational logics in that those were all over-approximate. Ours, on the
other hand, is
an under-approximate logic~\cite{OHearn_19,deVries_Koutavas_11}.
Thus, while traditional logics aim to provably establish when relational
properties hold, ours aims instead for provably demonstrating \emph{violations}
of relational properties.

The power of under-approximate logics for provably detecting bugs was
recently explained by O'Hearn~\cite{OHearn_19}. He extended the
previous Reverse Hoare Logic of de Vries and
Koutavas~\cite{deVries_Koutavas_11}, so that the semantics explicitly tracks
erroneous executions, allowing
post-conditions to distinguish between normal and erroneous execution,
producing an Incorrectness Logic.

The semantics of our language, IMP, like that of de Vries and Koutavas
does not explicitly track erroneous execution.
Doing so offers less benefit for the purposes of
relational reasoning, where erroneous behaviour cannot be characterised
by individual executions but rather only by \emph{comparing} two executions.

Just as our relational logic can be seen as the relational analogue of
prior under-approximate Hoare logics,
our decomposition result (\cref{sec:decomposition}) can also be seen as the
relational analogue of Beringer's decomposition result for over-approximate
relational logic~\cite{Beringer_11}. 

\section{Conclusion}

We presented the first under-approximate relational logic, for the simple
imperative language IMP. Our logic is sound and complete. We also showed
a decomposition principle allowing under-approximate relational logic
assertions to be proved via under-approximate Hoare logic reasoning.
We briefly discussed our logic's application to some small
examples, for provably demonstrating the presence of insecurity.
Being a general relational logic,
  it should be equally applicable for provably demonstrating violations
  of other relational properties too.

O'Hearn's Incorrectness Logic~\cite{OHearn_19} is an under-approximate
Hoare logic that was very recently extended to produce the first
Incorrectness Separation Logic~\cite{raadlocal}.
Over-approximate relational separation logics have been
studied~\cite{Yang07,Ernst_Murray_19} and an interesting direction for future research would
include investigation of under-approximate relational separation logics.
Others naturally include the use of under-approximate Hoare logic reasoning
tools for under-approximate relational verification. We hope that this
paper serves as a step towards automatic, provable detection
of relational incorrectness.

\bibliographystyle{splncs04}
\bibliography{references}

\end{document}